\newtheorem{theorem}{Theorem}[section]
\newtheorem{lemma}[theorem]{Lemma}
\newtheorem{definition}[theorem]{Definition}
\newtheorem{remark}[theorem]{Remark}
\newtheorem*{remark*}{Remark}
\def\R{\Bbb R}
\def\P{\mbox{$\cal P$}} 
\def\Q{\mbox{$\cal Q$}} 
\def\G{\mbox{$\cal G$}}
\def\F{\Bbb F}
\def\ex{{{\xi}}}
\newtheorem*{rep@theorem}{\rep@title}
\newcommand{\newreptheorem}[2]{%
\newenvironment{rep#1}[1]{%
 \def\rep@title{#2 \ref{##1}}%
 \begin{rep@theorem}}%
 {\end{rep@theorem}}}
\newcommand{\vect}[1]{\mathbf{#1}}   
\newcommand{\vl}{\boldsymbol{\gamma}}
\newcommand{\vtau}{\boldsymbol{\tau}}
\newcommand{\vL}{\vect{L}}
\newcommand{\vF}{\vect{F}}
\newcommand{\vt}{\vect{b}}
\newcommand{\vT}{\vect{B}}
\newcommand{\vf}{\vect{f}}
\newcommand{\vU}{\boldsymbol{\Gamma'}}
\newcommand{\er}{\varphi}
\newcommand{\vx}{\vect{x}}
\newcommand{\vX}{\vect{X}}
\newcommand{\vY}{\vect{Y}}
\newcommand{\vy}{\vect{y}}
\newcommand{\vw}{\vect{w}}
\newcommand{\vz}{\vect{z}}
\newcommand{\vZ}{\vect{Z}}
\newcommand{\vone}{\vect{1}}
\newcommand{\vzero}{\vect{0}}
\newcommand{\ML}{{\bf ML}}
\newcommand{\LP}{{\bf LP}}
\newcommand{\E}[1]{\mathsf{E}\left[{#1}\right]}   
\newcommand{\Ep}[2]{\mathsf{E}_{#1}\left[{#2}\right]}   
\newcommand{\bPr}{\mathsf{Pr}}
\newtheorem*{definition*}{Definition}
\newcommand{\Normal}[2]{\mathcal{N}\left({#1}, {#2}\right)}
\begin{document}



\title{LP decoding excess over symmetric channels}

\author{Louay~Bazzi and~Ibrahim~Abou-Faycal 
\thanks{The authors are
    with the Department of Electrical and Computer Engineering,
    American University of Beirut, Beirut 1107 2020, Lebanon (e-mails:
    $\{$lb13, ia14$\}$@aub.edu.lb).}}

\maketitle

\begin{abstract}
  We consider the problem of Linear Programming (LP) decoding of
  binary linear codes.  The LP excess lemma was introduced by 
the first author, B. Ghazi, and R. Urbanke 
(IEEE Trans. Inf. Th., 2014) as a technique to trade
  crossover probability for ``LP excess'' over the Binary Symmetric  Channel.  
  We generalize the LP excess lemma to discrete, 
  binary-input, Memoryless, Symmetric and LLR-Bounded (MSB) channels.
  As an application, we extend a result by the first author and H. Audah 
  (IEEE  Trans. Inf. Th., 2015) on the impact of redundant checks on LP
  decoding to discrete MSB channels.
\end{abstract}

\section{Introduction}

In 2003, Feldman \cite{Fel03} introduced Linear Programming (LP)
decoding as a relaxation of Maximum Likelihood (ML) decoding.  The
good performance of LP decoding of LDPC codes and its relation to
iterative decoding was established in multiple studies such
as~\cite{FWK05, FS05, FMS07, KV03} (a comprehensive survey is found
in~\cite{BA14}).

The LP excess lemma was introduced and established in~\cite{BGU14} in
the context of the Binary Symmetric  Channel (BSC) 
as a technique to trade crossover probability
for ``LP excess'' when analyzing the LP decoder error probability
under the assumption that the all zeros codeword was transmitted. The
lemma says that if the LP decoder works on a slightly
nosier channel, we can guarantee  that it corrects a slightly
shifted-down version of the received LLRs.  In dual terms, this
implies the existence of a dual witness~\cite{FMS07} where the
variable nodes inequalities are satisfied on the variable nodes with
some constant positive ``LP excess''.  The lemma was used to study the
LP decoding thresholds of spatially coupled codes~\cite{BGU14} and the impact of redundant parity checks on the LP decoding
thresholds of LDPC codes on the BSC~\cite{BA14}.

In this paper we extend the LP excess lemma from the BSC to discrete, 
binary-input, Memoryless, Symmetric and LLR-Bounded (MSB) channels.
We define the channel model in Section \ref{sc:chmod} and we give the
needed background on LP decoding in Section \ref{sc:lpdec}.  We
state and prove our main result in Section \ref{sc:lpexbc}.  As an
application, we use the extended lemma in Section \ref{sc:app} to
extend the result of~\cite{BA14} to discrete MSB channels.

\vspace{-0.15in}
\subsection{Channel model}
\label{sc:chmod}

We consider MSB channels: an {\em MSB channel}~\cite{FS05} is a {\em binary-input} {\em Memoryless} channel where the input  alphabet is $\{0,1\}$ and the transition probability
has a {\em Symmetry} property as well as a {\em Bounded LLR}
property. For simplicity of the presentation, we assume that the
channel is {\em discrete}, i.e., the output alphabet $\Sigma$ is a
finite set (or a countably infinite set).  The channel is {\em
  symmetric} in the sense that we have a partition of $\Sigma$ into
pairs $(a,a^*)$, such that $\bPr (a | 0) = \bPr(a^* | 1)$ and $\bPr (a
| 1) = \bPr(a^* | 0)$.  The {\em pairing} is a bijective map $*:\Sigma
\rightarrow \Sigma$ such that $a^{**}=a$ for each $a \in\Sigma$.  Thus
the channel is fully specified by a triplet $ch = (\Sigma, p, *)$,
where $p$ is a probability distribution on $\Sigma$ when $0$ was
transmitted, i.e., $\bPr(a|0) = p(a)$ and $\bPr(a|1) = p(a^*)$. The
Log-Likelihood-Ratios (LLR) $L_{ch}(\cdot) = L(\cdot)$ is a
real-valued map on $\Sigma$ given by 
$$L(a) =\ln{\frac{p(a)}{p(a^*)}}.$$  Note that $L(a) = - L(a^*)$ for each $a
\in \Sigma$.  We assume that the channel is {\em LLR bounded} in the
sense that $\| L \|_\infty$ is upper bounded by a constant. If
$\Sigma$ is finite, LLR boundedness is equivalent to $p(a)\neq 0$
for all $a\in \Sigma$.  We denote by $\mu_{ch} = \mu$ the 
 LLR probability distribution given $0$ is transmitted, i.e., $\mu$ is the 
probability distribution of $L(a)$ where $a$ is   sampled according to $p$.

The importance of discrete MSB channels stems from the fact that they
allow the decoder to use  soft quantized information.  They
include for example the BSC, the mixed BSC-erasure channel and the
finitely-quantized additive Gaussian-noise channel.  The binary
erasure channel is an example of a discrete symmetric channel with
possibly infinite LLRs.

We are interested in small {\em
  distortions} of discrete MSB channels: 
\begin{definition}[Channel distortion]
If $ch = (\Sigma, p, *)$ is a
discrete MSB channel and $\alpha > 0$, we call channel $ch'$ an
{\em $\alpha$-distortion} of $ch$ if $ch' = (\Sigma, p',*)$ for some
probability distribution $p'$ on $\Sigma$ such that the $L_1$-distance
$$\| p - p'\|_1 :=\sum_{a} |p(a)-p'(a)|\leq \alpha.$$ Note that, $ch'$
shares with $ch$ the same paring map $*$.
\end{definition}

For instance, consider the $\beta$-BSC channel with cross over
probability $\beta$. 
An $\alpha$-distortion of the $\beta$-BSC is the $\beta'$-BSC where
$|\beta-\beta'|\leq \alpha/2$.

{\bf Notations.} In this document we use a bold-faced notation to
refer to $n$-dimensional vector: we transmit a length-$n$ binary
string $\vx \in \{0,1\}^n$ and receive  $\vy \in
\Sigma^n$ of $\vx$.  Additionally, we denote by $p^n$ the product
distribution on $\Sigma^n$ associated with $p$ and $\mu^n$ the product
distribution on $\R^n$ associated with $\mu$. Thus, if $\vx=\vzero$,
where $\vzero$ is the all-zeros vector, then $\vy$ is distributed
according to $p^n$ and the corresponding LLR vector $$\vl = \vL(\vy) :=
(L(\vy_i))_{i=1}^n \in \R^n$$ is distributed according to $\mu^n$.

\vspace{-0.15in}
\subsection{LP decoding} 
\label{sc:lpdec}

Let $Q \subset \F_2^n$ be an $\F_2$-linear code with blocklength $n$
and $ch = (\Sigma, p, *)$ a discrete MSB channel.  Consider
transmitting a codeword $\vx \in Q$ over $ch$, which outputs $\vy\in
\Sigma^n$.  The ML decoder of $Q$ is given by $$\ML(\vy) =
\arg\max_{\vx\in Q} {P_{\vY|\vX}(\vy|\vx)}.$$  In terms of the LLR
vector $\vl = \vL(\vy)$, the ML decoder is given by $$\ML_Q(\vl)
=\arg\min_{\vx\in Q} \langle \vx,\vl \rangle,$$ where $ \langle \vx,\vl
\rangle := \sum_ i x_ i \vl_i.$  

Feldman {\em et
  al.}~\cite{Fel03,FWK05} introduced the notion of LP decoding, which
is based on relaxing the optimization problem on $Q$ into a LP.  Due
to the linearity of the objective function $ \langle \vx,\vl \rangle
$, optimizing over $Q$ is equivalent to optimizing over the convex
polytope $\operatorname{conv}(Q) \subset \R^n$ spanned by the convex
combinations of the codewords in $Q$.  The idea of Feldman is to relax
$\operatorname{conv}(Q)$ into a larger lower-complexity polytope.  In
general terms, an {\em LP-relaxation} of $Q$ is a $Q$-symmetric convex
polytope $P \subset [0,1]^n$, where {\em $Q$-symmetry} means that
$(|\vx_i-\vy_i|)_{i=1}^n \in Q$, for each $\vx \in Q$ and $\vy \in
P$~\cite{Fel03}. Note that $Q$-symmetry implies that $Q \subset P$.

The {\em LP decoder} is
given by \[
\mbox{$\LP_{P}(\vl) = \arg\min_{\vx\in P}\langle \vx,\vl \rangle.$}
\]

While useful constructions of $P$ are obtained from Tanner graph
representations~\cite{Fel03, FWK05}, it is simpler to establish the
LP-excess lemma in the general framework of $Q$-symmetric polytopes $P
\subset [0,1]^n$.  The $Q$-symmetry of $P$ implies that when
evaluating the LP decoding error probability, we can assume without
loss of generality that the all-zeros codeword $\vzero$ was
transmitted~\cite{Fel03}.  Thus $\vl\sim \mu^n$, where $\mu=\mu_{ch}$
is the LLR probability distribution given $0$.  As in previous
works~\cite{Fel03,FWK05}, we assume that the LP decoder fails if
$\vzero$ is not the unique optimal solution of the LP, i.e., the
$P$-LP decoder {\em succeeds} on $\vl$ iff $\LP_P(\vl) = \vzero$.

We say that the LP decoder {\em succeeds with LP excess $\ex$} on
$\vl$ if it succeeds on $\vl-\ex \vone$, i.e., $\LP_P(\vl-\ex \vone) =
\vzero$, where $\vone \in \R^n$ is the all ones vector and $(\vl-\ex \vone)_i = \vl_i -\ex$, for $i=1,\ldots,n$.

For constructions of $P$ from a Tanner graphs, LP excess can be
interpreted in terms of the notion of a dual-witness~\cite{FMS07} as
follows.  In dual terms, the $P$-LP decoder succeeds with LP excess
$\ex$ on $\vl$ iff $\vl-\ex \vone$ has a dual-witness, i.e., $\vl$ has
a dual-witness where each of the dual-witness variable nodes
inequalities is satisfied with ``LP excess $\ex$'' (see Definition 2.1
and Theorem 2.2 in~\cite{BA14} for the equivalent dual
characterizations of LP decoding success).

When studying the LP decoding error probability as the block length
$n$ tends to infinity, we consider an {\em infinite family} of
$\F_2$-linear codes $\Q = \{Q_n \}_n$ and an associated infinite
family of LP-relaxation $\P = \{ P_n\}_n$.  We say that the $\P$-LP
decoder succeeds on $ch$ with high probability if
\[
\mbox{$\lim_{n\rightarrow \infty}\bPr_{\vl \sim \mu^n} \left[
    \LP_{P_n}(\vl) \neq \vzero \right]=0.$}
\]
We say that the $\P$-LP decoder {\em ``succeeds on $ch$ with LP excess
  $\ex$ with high probability''\/} if $$\mbox{$\lim_{n\rightarrow
    \infty}\bPr_{\vl \sim \mu^n} \left[ \LP_{P_n}(\vl - \ex \vone) \neq
    \vzero\right]=0.$}$$

\section{LP excess lemma}\label{sc:lpexbc}

In this section, we extend the BSC LP excess lemma~\cite{BGU14} stated below to
discrete MSB channels.
\begin{lemma}[\cite{BGU14}]\label{lpexbsc} 
  {\em (BSC LP Excess Lemma: trading crossover probability with LP
    excess)} Consider the $\beta$-BSC  which crossover
  probability $0< \beta< 1/2$.  Let $\Q$ be an infinite family of
  $\F_2$-linear codes and $\P$ an associated family of LP-relaxations.\\
  Assume that there exists $\beta<\beta'< 1/2$ such that the $\P$-LP
  decoder succeeds on the $\beta'$-BSC with high probability.  \\Then,
  there exists a $\ex>0$ such that the $\P$-LP decoder succeeds on the
  $\beta$-BSC with LP excess $\ex$ with high probability.
\end{lemma}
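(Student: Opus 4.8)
The plan is to exploit two facts: the $\beta$-BSC is less noisy than the $\beta'$-BSC, and $\LP_P$ is invariant under multiplying its objective by a positive constant. I would first normalize. Write $\gamma=\ln\frac{1-\beta}{\beta}$ and $\gamma'=\ln\frac{1-\beta'}{\beta'}$, so $\gamma>\gamma'>0$ since $\beta<\beta'<1/2$. When $\vzero$ is sent over the $\beta$-BSC the LLR vector is $\vl=\gamma\vect{s}$, where $\vect{s}\in\{-1,+1\}^n$ has $\vect{s}_i=-1$ exactly on the set $F$ of flipped coordinates, so $\vl-\ex\vone=\gamma(\vect{s}-\tfrac{\ex}{\gamma}\vone)$. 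Dividing the objective by $\gamma>0$ and setting $\delta:=\ex/\gamma$, success with excess $\ex$ becomes the combinatorial statement that, for every $\vx\in P\setminus\{\vzero\}$,
\[
\langle\vx,\vect{s}\rangle>\delta\,\|\vx\|_1,\qquad\text{equivalently}\qquad \sum_{i\in F}\vx_i<\tfrac{1-\delta}{2}\,\|\vx\|_1,
\]
where I used $\vx\ge\vzero$ to write $\langle\vx,\vone\rangle=\|\vx\|_1$ and $\langle\vx,\vect{s}\rangle=\|\vx\|_1-2\sum_{i\in F}\vx_i$. The same reduction with $\delta=0$ turns the hypothesis into: with high probability over the $\beta'$-flip set $F'$, one has $\sum_{i\in F'}\vx_i<\tfrac12\|\vx\|_1$ for all $\vx\in P\setminus\{\vzero\}$. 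Thus the lemma reduces to a statement about the fractional flipped weight $M(F):=\max_{\vx\in P\setminus\{\vzero\}}\bigl(\sum_{i\in F}\vx_i\bigr)/\|\vx\|_1$, namely that $M(F')<\tfrac12$ w.h.p. should force $M(F)<\tfrac{1-\delta}{2}$ w.h.p.\ for some fixed $\delta>0$.

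Next I would introduce the monotone (thinning) coupling putting $F\subseteq F'$: generate $F'$ at rate $\beta'$ and retain each of its elements independently with probability $\beta/\beta'$. As $M$ is monotone in its set argument this already gives $M(F)\le M(F')<\tfrac12$, but not the strict improvement to $\tfrac{1-\delta}{2}$. To gain the margin I would split the maximization over pseudocodewords. For ``spread-out'' $\vx$ (those with $\max_i \vx_i/\|\vx\|_1$ small) the normalized weight has mean $\beta$ and concentrates, so a Bernstein/McDiarmid estimate with a union bound keeps it below any fixed constant exceeding $\beta$; since $\beta<1/2$, taking $\delta<1-2\beta$ makes $\tfrac{1-\delta}{2}>\beta$, so these $\vx$ are harmless \emph{independently} of the hypothesis. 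The dangerous ``spiky'' $\vx$ are exactly those for which this concentration is too weak, but they are controlled by the hypothesis: a pseudocodeword concentrated on few coordinates is flipped with constant probability and would already violate $M(F')<\tfrac12$. Conditioning on $F'$ and using $\E{\textstyle\sum_{i\in F}\vx_i \mid F'}=\tfrac{\beta}{\beta'}\sum_{i\in F'}\vx_i\le\tfrac{\beta}{\beta'}\cdot\tfrac12\|\vx\|_1$, the thinning contracts every fractional weight by the factor $\beta/\beta'<1$, which supplies the excess; choosing any $\delta\in\bigl(0,\,1-\beta/\beta'\bigr)$ (and then $\ex=\gamma\delta$) yields $M(F)<\tfrac{1-\delta}{2}$ with high probability. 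Note $1-\beta/\beta'<1-2\beta$ because $\beta'<1/2$, so this choice is consistent with the spread-regime requirement.

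The step I expect to be the real obstacle is making the second paragraph uniform over the whole polytope at once. Naive monotone domination only transfers the threshold $\tfrac12$ without improving it, and worst-case Lipschitz concentration of $M(F)=\sup_{\vx}(\cdot)$ fails precisely on spiky directions (a near-unit-vector pseudocodeword has fluctuation $\Theta(1)$). The crux is therefore to argue that the hypothesis $M(F')<\tfrac12$ w.h.p.\ already forbids pseudocodewords concentrated enough to defeat concentration, so that the spread regime (handled by concentration) and the spiky regime (handled by the hypothesis together with the contraction factor $\beta/\beta'$) can be merged into a single high-probability bound via a union over the two small failure events. Once this uniformity is established, the quantitative choice of $\delta$, and hence of $\ex=\gamma\delta>0$, is routine.
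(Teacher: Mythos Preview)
Your reduction to the fractional flipped weight $M(F)$ is correct, and your monotone coupling $F\subseteq F'$ is the right realization of the $\beta'$-BSC as a degradation of the $\beta$-BSC. But the argument does not close, and the gap is precisely the one you flag: you never obtain \emph{uniform} control over all pseudocodewords. In the spread-out regime you invoke per-$\vx$ concentration plus a union bound, yet the polytope has a continuum of directions (and even restricting to vertices gives no a priori bound on their number relative to the concentration exponent, which itself degrades as $\max_i \vx_i/\|\vx\|_1$ grows). In the spiky regime your thinning identity $\E\bigl[\sum_{i\in F}\vx_i\,\big|\,F'\bigr]=(\beta/\beta')\sum_{i\in F'}\vx_i$ is only an expectation bound, and the very spikiness that sent $\vx$ to this regime is what prevents upgrading it to a high-probability statement. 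The heuristic that the hypothesis ``forbids'' spiky pseudocodewords is not sharp enough to bridge the two regimes: a pseudocodeword with mass split over, say, $O(1)$ coordinates is neither concentrated enough for Bernstein to beat a union bound nor spiky enough to be excluded outright by $M(F')<1/2$ w.h.p.

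The proof in \cite{BGU14} (and its generalization in this paper) sidesteps the union bound entirely by working in the fundamental cone $C_n=\{\vl:\LP_{P_n}(\vl)=\vzero\}$ rather than pseudocodeword by pseudocodeword. The key idea is an averaging argument exploiting that $C_n$ is a \emph{convex cone} closed under translation by $(\R^+)^n$: realize the $\beta'$-BSC error as the OR of the $\beta$-BSC error with an independent $\mathrm{Ber}(\delta)$, form the random (rescaled) LLR vector $\vU$ under this coupling, and set $\vw(\vy)=\E\bigl[\vU\cdot 1_{C_n}(\vU)\bigr]$. Convexity of $C_n$ forces $\vw(\vy)\in C_n$ deterministically, and a coordinate-wise computation shows $\vL(\vy)_i-\vw(\vy)_i\geq \ex$ whenever the conditional error probability $\Phi(\vy)=\bPr[\vU\notin C_n\mid \vy]$ is small; Markov's inequality on $\Phi$ then finishes. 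The point is that the uniformity over $P_n$ is absorbed into the single event $\vU\in C_n$, so no union over pseudocodewords is ever needed.
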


\begin{lemma} \label{lpexg} {\em (MSB LP Excess Lemma: trading channel
    distortion with LP excess)} Let $ch$ be a discrete MSB channel,
  $\Q$ an infinite family of $\F_2$-linear codes and $\P$ an
  associated family of LP-relaxations. \\ Assume that there exists 
  $\alpha>0$ such that for each $\alpha$-distortion $ch'$ of $ch$, the
  $\P$-LP decoder succeeds on $ch'$ with high probability. \\Then, there
  exists $\ex>0$ such that the $\P$-LP decoder succeeds on $ch$ with
  LP excess $\ex$ with high probability.
\end{lemma}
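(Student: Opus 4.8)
The plan is to reformulate LP success with excess as a statement about an LP margin, and then to manufacture a constant amount of margin out of the slack that the distortion hypothesis provides. For a received LLR vector $\vl$ put $m(\vl)=\min_{\vx\in P_n\setminus\{\vzero\}}\langle\vx,\vl\rangle/\|\vx\|_1$. Because $P_n\subseteq[0,1]^n$ forces $\vx\geq\vzero$, two facts are immediate: $m$ is nondecreasing in each coordinate of $\vl$, and $m(\vl-\ex\vone)=m(\vl)-\ex$. Consequently $\LP_{P_n}(\vl)=\vzero$ iff $m(\vl)>0$, and $\LP_{P_n}(\vl-\ex\vone)=\vzero$ iff $m(\vl)>\ex$, so the lemma is equivalent to exhibiting a constant $\ex>0$ with $\bPr_{\vl\sim\mu^n}[m(\vl)>\ex]\to1$. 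Applying the hypothesis to the zero distortion $ch$ itself (an $\alpha$-distortion, since $\|p-p\|_1=0$) already gives $\bPr[m(\vl)>0]\to1$; the entire difficulty is to upgrade this to a uniform positive gap.

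The mechanism for producing the gap is clearest through the sign--magnitude decomposition of the LLRs. Writing $\vl_i=\sigma_i\ell_i$ with $\ell_i=|\vl_i|$ and $\sigma_i\in\{\pm1\}$, symmetry forces the conditional law of a sign error to be $\bPr[\sigma_i=-1\mid\ell_i]=1/(1+e^{\ell_i})$; for the BSC one has $\ell_i\equiv\lambda$, $\sigma_i=-1$ is exactly a channel error, and $m(\vl)=\lambda(1-2W(\vy))$ with $W(\vy)=\max_{\vx}\langle\vx,\vy\rangle/\|\vx\|_1$ the worst normalized error count. I would fix $ch'$ to be a small symmetric degradation of $ch$ (for instance $p'=(1-\eta)p+\eta p^{*}$ with $p^{*}(a)=p(a^{*})$ and $\eta=\alpha/2$, which is within $L_1$-distance $\|p-p'\|_1=\eta\|p-p^{*}\|_1\leq\alpha$) and then couple the two outputs so that $ch'$ has stochastically more sign errors than $ch$ on every coordinate — the exact analogue of the monotone error-pattern coupling $\vy\leq\vy'$ between a $\beta$- and a $\beta'$-BSC. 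The hypothesis applied to $ch'$ then gives $m(\vl')>0$ with high probability.

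The heart of the argument is to show that the strictly larger sign-error rate of $ch'$ costs the worst direction a constant amount of margin. For the BSC this is transparent: under the monotone coupling $W(\vy)\leq W(\vy')$, and if $\vx_1$ maximizes $W(\vy)$ then $W(\vy')\geq W(\vy)+\langle\vx_1,\vy'-\vy\rangle/\|\vx_1\|_1$, a quantity whose conditional mean equals a positive constant times $(1-W(\vy))$; concentration of the independent extra errors over the support of $\vx_1$ forces $W(\vy)\leq 1/2-\delta$ for a constant $\delta>0$ whenever $W(\vy')<1/2$, i.e. $m(\vl)\geq\ex$ with $\ex\asymp\delta$. The plan is to run this argument in the MSB setting, replacing $W$ by the sign-weighted functional dictated by the decomposition above and showing that the extra sign errors of $ch'$, landing on the support of the worst direction, depress its normalized cost by a constant in expectation, with a Hoeffding/Bernstein bound — valid because $\|L\|_\infty$ is bounded — upgrading this to high probability.

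I expect two places to carry the real difficulty. First, unlike the BSC the magnitudes $\ell_i$ vary and are themselves altered by the degradation, so the clean identity $m(\vl)=\lambda(1-2W(\vy))$ is unavailable and one must reconcile a coupling that is monotone only in the signs with simultaneously shrinking magnitudes; keeping the sign-driven additive gain from being swamped by the magnitude change is the main structural obstacle, and is exactly where the finiteness of $\Sigma$ and the uniform LLR bound of the MSB model must enter. Second, the concentration has to hold uniformly over all $\vx\in P_n$, and it degrades for directions of small $\|\vx\|_1$, where the independent extra noise may miss the support of $\vx$ entirely; I would handle this by splitting the minimization at a slowly growing weight threshold, using concentration together with a covering argument over the relevant faces above the threshold and, below it, using LLR-boundedness and the hypothesis (which already precludes low-weight directions from driving $m(\vl')$ to $0$) to rule out a small $m(\vl)$. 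A contrapositive variant — arguing that if $m(\vl)$ were only barely positive then the extra noise of $ch'$ would push the worst direction's cost below $0$ with constant probability, contradicting success on $ch'$ — meets the same small-support difficulty and would be resolved in the same way.
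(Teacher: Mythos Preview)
Your reformulation via the margin $m(\vl)$ is correct, and the reduction to showing $\bPr[m(\vl)>\ex]\to1$ is clean. But the heart of your argument---concentration of the extra noise on the support of a worst direction, followed by a covering of $P_n$ above a weight threshold---has a genuine gap. The polytope $P_n$ is an \emph{arbitrary} $Q_n$-symmetric subset of $[0,1]^n$; nothing in the hypotheses bounds its number of vertices, faces, or covering number, so there is no union bound available to make your concentration uniform over all directions. Your low-weight regime is also unresolved: the hypothesis gives $m(\vl')>0$ with high probability, but this does not by itself prevent $m(\vl)$ from being arbitrarily close to $0$ along some low-weight direction---the inference ``success on $ch'$ rules out small $m(\vl)$ for low-weight $\vx$'' is the very thing to be proved, and you offer no separate mechanism for it.

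The paper sidesteps both obstacles by a device you do not use. It chooses the distortion so that the new LLR map is a constant scalar multiple of the old one: with $q$ supported on $\Sigma_-=\{a:L(a)<0\}$ and $p'=(1-\delta)p+\delta q$, one can tune $q$ so that $L'(a)=cL(a)$ for every $a$ and some fixed $c\in(0,1)$. (Your choice $p'=(1-\eta)p+\eta p^{*}$ does not have this property.) The payoff is that, writing $\vU=\beta c\,\vL(\vf(\vy,\vZ;\vT))$ for a suitable constant $\beta$, the vector $\vw(\vy)=\E{\vU\cdot 1_{C_n}(\vU)}$ lies in the fundamental cone $C_n$ by convexity, and a short computation gives $\vL(\vy)_i-\vw(\vy)_i\geq\ex$ for every $i$ whenever the conditional failure probability $\Phi(\vy)$ is below a fixed constant; Markov's inequality on $\Phi$ finishes. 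This averaging-in-the-cone argument is completely polytope-agnostic---it uses only that $C_n$ is a convex cone with $C_n+(\R^{+})^n\subset C_n$---and requires no concentration, no covering, and no case split on $\|\vx\|_1$.
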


In proving the Lemma, we follow similar steps to those taken
in~\cite{BGU14}. The starting point in~\cite{BGU14} is to realize the
$\beta'$-BSC as a distortion of the $\beta$-BSC resulting from the
bit-wise OR of the $\beta$-BSC error event with an independent
Bernoulli random variable $B$. The distorted channel operates
according to the original channel if $B=0$ and it produces an error if
$B=1$.  To generalize this construction, we use a similar Bernoulli-induced distortion of $ch$. The key new ingredient is a construction
of a probability distribution $q$ supported on the set of output
symbols with negative LLRs.  The distorted channel $ch'$ operates
according to the original channel if $B=0$ and according to $q$ if
$B=1$.  A key property of the constructed $q$ will be that the LLR map
$L'$ of $ch'$ is a positive constant scale of that of $ch$, i.e.,
there exists a constant $c\in (0, 1)$ such that $L'(a) = c L(a)$ for all
$a\in \Sigma$. This property will be essential in extending the
argument of~\cite{BGU14} to our setup.

\begin{proof}[Proof of Lemma \ref{lpexg}] 
  The proof is based on the fundamental cone. Let 
$C_n \subset \R^n$
  be the fundamental cone~\cite{KV03} of the $P_n$-LP decoder, i.e.,
  the set of all LLR vectors correctly decoded by the decoder: $$C_n =
  \{ \vl \in \R^n: \LP_{P_n}(\vl) = \vzero\}.$$ Since $\LP_{P_n}(\vl -
  \ex \vone) = \vzero$ is equivalent to $\vl \in C_n + \ex \vone$, our
  objective is to show that there exists a $\ex>0$ such that
  $\mu^n(C_n+ \ex \vone) = 1 - o_n(1)$. The hypothesis of the 
  theorem guarantees that for any $\alpha$-distortion $ch'$ of $ch$,
  $\mu'^n(C_n) = 1 - o_n(1)$, where $\mu'=\mu_{ch'}$ is the LLR
  probability distribution of $ch'$ given $0$.

  By the definition of the LP decoder, ${C}_n$ is the interior of the
  polar cone of ${P}_n$, i.e., $$C_n = \{ \vl \in \R^n: \langle \vl,
  \vx \rangle > 0 \mbox{ for each nonzero }\vx \in P_n\}.$$ We note that
  since ${P}_n \subset [0,1]^n \subset (\R^{+})^n$, $C_n$ is closed
  under translation by vectors in the non-negative quadrant, i.e.,
  $C_n+(\R^{+})^n \subset C_n$.  We will argue that $\ex$ exists using
  only the property that $C_n\subset \R^n$ is a convex cone such
  that $C_n+(\R^{+})^n \subset C_n$.

  Consider the partition of $\Sigma$ into three sets: 
\begin{eqnarray*}
\Sigma_- &=& \{ a
  \in \Sigma : p(a) < p(a^*)\}\\
 \Sigma_0~ &=& \{ a \in \Sigma : p(a) =  p(a^*)\}\\
 \Sigma_+ &=& \Sigma_-^*.
\end{eqnarray*}
  Thus $L$ is negative on
  $\Sigma_-$, zero on $\Sigma_0$ and positive on $\Sigma_+$. Without
  loss of generality, we assume that $\Sigma_-$ and $\Sigma_+$ are
  nonempty (otherwise, the channel capacity is zero).

  Let $0< \delta< 1$ be a constant such that $\delta \leq \alpha/2$
  and define channel $ch'=(\Sigma, p',*)$, where $p'$ is the
  distribution on $\Sigma$ given by
  \[ \begin{array}{lll}
    p'(a) & = \delta q(a) + (1-\delta)p(a) &\mbox{ if } a \in \Sigma_- \\
    p'(a) & = (1-\delta)p(a) &\mbox{ if }  a \in \Sigma_0 \cup \Sigma_+,
  \end{array} 
  \]
  and where $q$ is a probability distribution on $\Sigma_-$ that will be
  specified later. We will sample from $p'$ as follows. First we
  sample a Bernoulli random variable $B \sim Ber(\delta)$ which takes the value $1$ with probability $\delta$. 
If $B=0$, we sample from $p$ and if $B=1$, we sample
  from $q$.  Channel $ch'$ is an $\alpha$-distortion of $ch$ because
  $\|p-p'\|_1\leq 2 \delta \leq \alpha$.  The LLR map of $ch'$ denoted
  by $L'$ is given by:
  \[
 L'(a) = \ln \left[ \frac{\delta q(a)+
        (1-\delta)p(a)}{(1-\delta)p(a^*)} \right]   \mbox{ if } a\in
    \Sigma_-, 
  \]
  $L'(a) = - L'(a^*)$ if $a\in \Sigma_+$ and $L'(a) = 0$ if $a\in
  \Sigma_0$. We choose $q$ so that there exists a constant  $c\in (0, 1)$ 
  such that $L'(a) = c L(a)$ for all $a \in \Sigma$ which is
  guaranteed by enforcing $L'(a) = c L(a)$ on $a \in \Sigma_-$, i.e.,
  \[
  \frac{\delta q(a)+ (1-\delta)p(a)}{(1-\delta)p(a^*)} =
    \left( \frac{p(a)}{p(a^*)} \right)^c, \qquad a\in \Sigma_-.  
  \]
  Solving for $q(\cdot)$, we get 
  \[
  q(a) = \frac{1-\delta}{\delta}p(a) \left[ \left(
        \frac{p(a^*)}{p(a)}\right)^{1-c} -1 \right], \qquad a\in
    \Sigma_-.
  \]
  Since $c\in (0,1)$ 
and $p(a^*)> p(a)$ for $a\in \Sigma_-$, we have
  $q(a) > 0$ on $a \in \Sigma_-$.  To guarantee that $\sum_a q(a) =
  1$, we choose  $c\in (0, 1)$ so that $s(c) = \frac{\delta}{1-\delta}$,
  where $$s(c) := \sum_{a\in \Sigma_-} p(a) \left[ \left(      \frac{p(a^*)}{p(a)} \right)^{1-c} -1 \right].$$ This follows from
  the continuity of $s(\cdot)$ as a function of $c$ and the facts that
  $s(1) = 0$ and $$s(0) = \sum_{a\in \Sigma_-} p(a^*) - \sum_{a\in
    \Sigma_-} p(a) = p(\Sigma_+) - p(\Sigma_-)>0,$$ since $\Sigma_+$
  and $\Sigma_-$ are assumed to be nonempty. In what follows,
 fix  $\delta \in (0,1)$ 
to be any constant such that $\delta \leq
  \frac{\alpha}{2}$ such that $\frac{\delta}{1-\delta} < p(\Sigma^+) -
  p(\Sigma^-)$ to guarantee the existence of $q$ and $c$.

  In the remainder of the proof we follow the steps in~\cite{BGU14}:
  we use an averaging argument followed by Markov Inequality.  For
  clarity, we will use capital letters to refer to random quantities.  
  Define $\vf: \Sigma^n \times \Sigma^n \times \{0,1\}^n 
  \rightarrow \Sigma^n$ by 
\[
\vf(\vy,\vz;\vt)_i =\left\{ \begin{array}{ll}\vz_i & \mbox{ if } \vt_i = 1\\
    \vy_i & \mbox{ if  } \vt_i = 0. \end{array}\right.
\]
Thus, if $\vY \sim
  p^n$, $\vZ \sim q^n$ and $\vT \sim Ber(\delta)^n$, then
  $\vf(\vY,\vZ;\vT)$ is distributed according to $p'^n$, and $\vl'=
  \vL'(\vf(\vY,\vZ;\vT))$ is according to $\mu'^n$. For each $\vy\in
  \Sigma^n$, define the random vector
  \[
  \mbox{$\vU(\vy,\vZ;\vT) = \beta \,\vL'(\vf(\vy,\vZ;\vT)) = \beta c
    \, \vL(\vf(\vy,\vZ;\vT)) \in \R^n$}
  \]
  over the random choice of $\vZ \sim q^n$ and $\vT \sim
  Ber(\delta)^n$, where $\beta> 0$ is a constant to be specified
  later.  Denoting by $1_{C_n} : \R^n \rightarrow \{0,1\}$ the
  indicator function of $C_n$ (i.e.  $1_{C_n}(\vl) = 1$ iff $\vl \in
  C_n$), we define $\vw(\vy)\in \R^n$ for $\vy\in \Sigma^n$ by
  \begin{align}
    \vw(\vy) \, & = \Ep{\vZ, \vT}{\vU(\vy,\vZ;\vT) \times
      1_{{C}_n}(\vU(\vy,\vZ;\vT))}.
    \label{eq:w}
  \end{align}

  For each $\vy \in \Sigma^n$ we have $\vw(\vy) \in {C}_n$ since
  ${C}_n$ is a convex cone.
  Thus, interpreting vector inequalities  coordinate-wise,
  \begin{equation}\label{psee}
    \mu^n({C}_n + \ex \vone)   \geq \bPr_{\vY \sim p^n} \left[ (\vL(\vY) - \vw(\vY) ) \geq \ex \vone \right]
  \end{equation}
  because ${\bf v}\geq \vw(\vy)$, for any ${\bf v}\in C_n$ and any
  $\vy\in \Sigma^n$ since ${C}_n + (\R^{+})^n \subset
  {C}_n$. Equation~(\ref{eq:w}) can be written as
  \begin{align*}
    \vw(\vy) \, & =\E{\vU(\vy,\vZ;\vT)} - \E{\vU(\vy,\vZ;\vT) |
      \er(\vy,\vZ;\vT)} \cdot \Phi(\vy),
  \end{align*}
  where $\er(\vy,\vZ,\vT)$ is the error event ``$\vU(\vy,\vZ;\vT) \not
  \in {C}_n$'' and $$\Phi(\vy) := \bPr_{\vZ,\vT}[\er(\vy;\vZ,\vT)].$$
  The first term $$\E{\vU(\vy,\vZ;\vT)_i} = \beta c (1-\delta) L(\vy_i)
  - \beta c \delta s,$$ where $$s := - \Ep{Z\sim q}{L(Z)}$$ is a positive
  scalar because $q$ is supported on $\Sigma_-$ and $\Ep{Z\sim
    q}{L(Z)}$ is strictly negative. The second term
  $$\E{\vU(\vy,\vZ;\vT)_i | \er(\vy,\vZ;\vT)} \geq - \beta
  \|L'\|_\infty = - \beta c \|L\|_\infty$$ since the LLRs are bounded.
  It follows that
  \[
  \vw(\vy)_i \leq \beta c \bigl[ (1-\delta) L(\vy_i) -\delta s +
  \|L\|_\infty \Phi(\vy) \bigr].
  \]
  Setting $\beta = \frac{1}{c(1-\delta)}$, we get $$\vw(\vy)_i \leq
  L(\vy_i) - \frac{\delta s- \|L\|_\infty \Phi(\vy)}{1-\delta}.$$

  Therefore, to guarantee that the vector inequality
  $\vL(\vy)-\vw(\vy) \geq \ex \vone$ holds, it is enough to require
  the scalar inequality ${\delta s- \|L\|_\infty \Phi(\vy)} \geq
  \ex(1-\delta)$. 
Note this reduction of the vector
  inequality to a scalar inequality critically depends on the choice
  of $q$ so that $L' = cL$.  Setting $\ex = \frac{\delta
    s}{2(1-\delta)}$, we get from (\ref{psee}) that  
  \[
  \mbox{$1-\mu^n({C}_n + \ex\vone) \leq \bPr_{\vY}\left[\Phi(\vY) >
      \frac{\delta s }{2\|L\|_\infty} \right].$}
  \]
  Using Markov Inequality, and the fact that $\Ep{\vY}{\Phi(\vY)} = 1
  - \mu'^n({C}_n)$, we obtain
  \[
  \mbox{ $1-\mu^n({C}_n + \ex\vone) \leq \frac{2\|L\|_\infty}{\delta
      s}(1 - \mu'^n({C}_n)).$ }
  \]
  Since $\mu'^n({C}_n)=1-o_n(1)$, we conclude that $\mu^n({C}_n +
  \ex\vone) = 1 - o_n(1)$, where $\ex>0$ is constant which depends on
  $\alpha$ and the channel $ch$.
\end{proof}

\begin{remark}\label{unbllr} 
\begin{itemize}
\item[I)] If we replace probability distributions with densities,
  the LP excess lemma and its proof hold for continuous MSB
  channels. 
\item[II)] We conjecture that the LLR boundedness is not
  needed for the lemma to hold.
  One justification of this conjecture is the Gaussian channel
  discussed below.
\end{itemize}
\end{remark}

\vspace{-0.15in}
\subsection{Gaussian channel}

On the $\sigma$-Additive White Gaussian Noise ($\sigma$-AWGN) channel,
we receive $Y=(-1)^x+\sigma Z$, where $x=$ $0$ or $1$ is the
transmitted bit and $Z \sim \Normal{0}{1}$, the standard Gaussian
distribution.  The AWGN has unbounded LLRs.  

By a simple scaling
argument, the following version of the LP excess lemma holds on the
AWGN:

\begin{lemma}
  Let $Q_n \subset \F_2^n$ be an $\F_2$-linear code, $P_n\subset \R^n$
  an LP-relaxation of $Q_n$ and $\sigma'>\sigma>0$.  The probability
  of success of the $P_n$-LP decoder on the $\sigma'$-AWGN is equal to
  its probability of success on the $\sigma$-AWGN with LP excess
  $\ex$, where $\ex = \frac{\sigma'-\sigma}{\sigma'} $.
\end{lemma}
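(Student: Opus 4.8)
The plan is to exploit the scale-invariance of the fundamental cone together with the fact that the AWGN perturbs the transmitted signal by an additive Gaussian whose standard deviation is exactly $\sigma$. By $Q_n$-symmetry I may assume the all-zeros codeword was sent, so the received vector on the $\sigma$-AWGN is $\vY = \vone + \sigma \vZ$ with $\vZ \sim \Normal{0}{1}^n$, and the LP decoder operates on the received soft values $\vY$ as its cost/LLR vector (the exact LLR is $\tfrac{2}{\sigma^2}\vY$, a positive scalar multiple; since $C_n$ is a cone this rescaling is immaterial to decoding). Thus the decoder succeeds iff $\vY \in C_n$, and it succeeds with LP excess $\ex$ iff $\vY - \ex\vone \in C_n$.

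First I would record the two success events as events driven by a single standard Gaussian $\vZ$. Success on the $\sigma$-AWGN with LP excess $\ex$ means $\vY - \ex\vone = (1-\ex)\vone + \sigma\vZ \in C_n$; because $C_n$ is a cone I may divide by $\sigma > 0$ and rewrite this as $\tfrac{1-\ex}{\sigma}\vone + \vZ \in C_n$. Success on the $\sigma'$-AWGN means $\vone + \sigma'\vZ \in C_n$, which upon dividing by $\sigma'$ becomes $\tfrac{1}{\sigma'}\vone + \vZ \in C_n$.

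Next I would match the two normalized events. Both are subsets of $\vZ$-space of the form $\{\, t\vone + \vZ \in C_n \,\}$, so they coincide precisely when the translation parameters agree, i.e. $\tfrac{1-\ex}{\sigma} = \tfrac{1}{\sigma'}$. Solving gives $\ex = 1 - \tfrac{\sigma}{\sigma'} = \tfrac{\sigma'-\sigma}{\sigma'}$, which is positive since $\sigma' > \sigma$. For this value of $\ex$ the two events are literally identical as subsets of $\{\vZ\}$, hence carry the same $\Normal{0}{1}^n$-probability, which is exactly the claimed equality of success probabilities.

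I do not expect a genuine obstacle here; this is the \emph{simple scaling argument} promised in the text. The only points requiring care are the bookkeeping of the affine map $\vZ \mapsto \vone + \sigma\vZ$ (separating the deterministic mean $\vone$ from the noise scale $\sigma$) and the explicit invocation of the scale-invariance of $C_n$, which is what lets me pull the factors $1/\sigma$ and $1/\sigma'$ through the cone membership. Unlike Lemma \ref{lpexg}, no averaging or Markov step is needed: the Gaussian family is closed under the exact rescaling relating the two channels, so the comparison is an equality rather than an inequality, and the hypothesis of LLR boundedness (violated by the AWGN) is never used.
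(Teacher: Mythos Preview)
Your argument is correct and is essentially the same scaling argument the paper gives: both reduce the two success events to cone membership for $\vone+\sigma'\vZ$ and $(1-\ex)\vone+\sigma\vZ$, then use positive-scale invariance of $C_n$ to match them via $\tfrac{\sigma}{\sigma'}(1+\sigma' z)=1+\sigma z-\ex$, which is exactly your equation $\tfrac{1-\ex}{\sigma}=\tfrac{1}{\sigma'}$. The only cosmetic difference is that you normalize each side separately (dividing by $\sigma$ and by $\sigma'$) whereas the paper multiplies one side by $\sigma/\sigma'$; and you make explicit the convention---implicit in the paper's computation and in the stated value $\ex=\tfrac{\sigma'-\sigma}{\sigma'}$---that the decoder is fed $\vY$ itself rather than $\tfrac{2}{\sigma^2}\vY$ when defining the excess shift.
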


\begin{proof}
  The LLR  map is $L(y) = \frac{2}{\sigma^2}y$ (e.g., \cite{Fel03}).
  Assume that $\vzero$ was transmitted and let $\mu$ and $\mu'$ be the
  LLR densities associated with $\sigma$ and $\sigma'$, respectively.
  Since $$\frac{\sigma}{\sigma'}(1+\sigma' z) =1+ \sigma z - \ex ,$$ we
  get $\mu'^n(C_n) = \mu^n(C_n+ \ex \vone)$, for each $C_n \subset
  \R^n$ closed under multiplication by positive scalars and in
  particular for the fundamental cone $C_n$ of the $P_n$-LP decoder.
\end{proof}

The  distinguishing features of  the AWGN from other channels in this context are:
(1) scaling $Z$ corresponds to distorting the channel and (2) the LLR
map is linear in $y$.

\section{Application to redundant parity checks}\label{sc:app}

The BSC LP excess lemma was used in~\cite{BA14} to
show that the LP decoding threshold of LDPC codes on the BSC remains
the same upon adding all redundant parity checks, assuming that the
underlying Tanner graph has {\em bounded degree} and possesses two
natural properties called {\em asymptotic strength} and {\em rigidity}
(see Corollary 1.7 in \cite{BA14}). One implication of this result is
that the BSC threshold is a function of the dual code and is not tied
to the particular Tanner graph realization of the
  code. We use in this section our extension of the LP excess lemma to
  extend the result of~\cite{BA14} from the BSC to discrete MSB
  channels:
\begin{theorem}\label{appth}
  Let $\G=\{G_n\}_n$ be an infinite family of Tanner graphs, where
  $G_n$ has $n$ variable nodes.  Let $\overline{\G} = \{
  \overline{G_n}\}_n$ be the resulting family of Tanner graphs
  obtained by adding all redundant checks, i.e., the parity check
  nodes of $\overline{G_n}$ correspond to all the nonzero elements of
  the dual code of $G_n$.  Assume that $\G$ has bounded check degree
  and that $\G$ is asymptotically strong and rigid.
  Let $ch$ be a discrete MSB channel.  Assume that there exists 
  $\alpha>0$ such that for each $\alpha$-distortion $ch'$ of $ch$, the
  $\overline{\G}$-LP decoder succeeds on $ch'$ with high
  probability. Then, the $\G$-LP decoder succeeds on $ch$ with high
  probability.
\end{theorem}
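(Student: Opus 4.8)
The plan is to chain the generalized LP excess lemma (Lemma \ref{lpexg}) with the structural reduction of \cite{BA14}, the point being that the latter is insensitive to the particular channel and depends only on the combinatorics of the Tanner graph together with the real LLR vector. Let $\overline{\P} = \{\overline{P_n}\}_n$ denote the family of LP-relaxations associated with the all-redundant-checks graphs $\overline{G_n}$, and let $\P = \{P_n\}_n$ be the relaxations associated with the original graphs $G_n$.

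First I would apply Lemma \ref{lpexg} to the family $\overline{\P}$. The hypothesis provides $\alpha>0$ such that for every $\alpha$-distortion $ch'$ of $ch$ the $\overline{\P}$-LP decoder succeeds on $ch'$ with high probability, which is exactly the premise of Lemma \ref{lpexg}. Its conclusion yields a constant $\ex>0$ such that the $\overline{\P}$-LP decoder succeeds on $ch$ with LP excess $\ex$ with high probability, i.e. $\lim_{n\to\infty}\bPr_{\vl\sim\mu^n}[\LP_{\overline{P_n}}(\vl-\ex\vone)\neq\vzero]=0$.

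Next I would pass to the dual language of \cite{BA14}. By the dual-witness characterization (Definition 2.1 and Theorem 2.2 of \cite{BA14}, recalled in Section \ref{sc:lpdec}), succeeding with LP excess $\ex$ means that for a $1-o_n(1)$ fraction of the LLR mass under $\mu^n$ the vector $\vl$ admits a dual witness for $\overline{G_n}$ in which every variable-node inequality holds with slack at least $\ex$. The core of the argument in \cite{BA14} is then to convert such an excess-$\ex$ dual witness for the saturated graph $\overline{G_n}$ into an ordinary dual witness for the original graph $G_n$. This is precisely where bounded check degree, asymptotic strength, and rigidity enter: rigidity constrains how the weight the witness places on redundant checks may be redistributed onto the original checks, asymptotic strength guarantees that this redistribution is feasible for all but a vanishing fraction of configurations, and the uniform margin $\ex$ absorbs the error incurred by the redistribution. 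Since each of these ingredients is a statement about the graph $\overline{G_n}$ and a real LLR vector $\vl$, and never about the law of $\vl$ beyond $\|L\|_\infty<\infty$ (which holds for any MSB channel), the argument of \cite{BA14} should apply once the excess has been obtained. Running it produces, for all but an $o_n(1)$ fraction of LLR vectors, a dual witness for $G_n$, hence $\lim_{n\to\infty}\bPr_{\vl\sim\mu^n}[\LP_{P_n}(\vl)\neq\vzero]=0$, which is the claimed success of the $\G$-LP decoder on $ch$.

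The main obstacle I anticipate is not the application of Lemma \ref{lpexg}, which is immediate, but verifying that the redistribution step of \cite{BA14} is genuinely channel-agnostic: one must confirm that the only analytic input it uses about $\vl$ is the excess margin $\ex$ together with LLR boundedness, and that the definitions of asymptotic strength and rigidity are purely graph-theoretic, so that nothing in the original BSC proof secretly exploits the two-point structure of the BSC LLR distribution. Under the MSB assumptions, and in particular bounded LLRs, I expect this verification to go through, with the generalized excess lemma playing exactly the role that the BSC excess lemma played in \cite{BA14}.
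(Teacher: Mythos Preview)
Your plan is correct and matches the paper's approach: apply Lemma \ref{lpexg} to $\overline{\P}$ to obtain a uniform excess $\ex>0$, then invoke the structural conversion of \cite{BA14}, which indeed depends only on $\ex$, the bound $\|L\|_\infty$, and the Tanner-graph combinatorics, not on the law of $\vl$. One small correction to your description of the roles of the hypotheses: the paper (following \cite{BA14}) routes the argument through the intermediate graph $\overline{G}_n^{\,k}$ consisting of all redundant checks of degree at most $k$; the excess together with \emph{asymptotic strength} is what allows trimming the $\overline{G_n}$ dual witness down to one on $\overline{G}_n^{\,k}$ for a suitable constant $k$, while \emph{rigidity} is used only at the very end to conclude $P(\overline{G}_n^{\,k})=P(G_n)$ for all large $n$ --- it is not a redistribution mechanism.
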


In order to prove the theorem we only need the following extension of
Theorem 1.2 in \cite{BA14} to discrete MSB channels:

\begin{lemma}\label{th1.2}
  Let $\G, \overline{\G}, ch, \alpha$, $ch'$ be as in Theorem
  \ref{appth}, and Let $d$ be the maximum degree of a check node in
  $\G$.   For $k \geq d$, let $\overline{\G}^k := \{
    \overline{G}_n^{k} \}_n$ be the resulting family of Tanner
  graphs obtained by including all redundant checks of degree at most
  $k$.  There exists a sufficiently large constant $k\geq d$ --where
  $k$ depends on $\alpha$ and the channel only-- such that the
  $\overline{\G}^k$-LP decoder succeeds on $ch$ with high probability.
\end{lemma}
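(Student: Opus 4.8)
The plan is to feed the all-redundant-checks family into the MSB LP excess lemma and then discard every check whose degree exceeds a constant $k$, spending the resulting excess to absorb the loss. First I would invoke Lemma~\ref{lpexg} with $\P=\overline{\G}$. The hypothesis of Lemma~\ref{th1.2} is exactly the hypothesis of Lemma~\ref{lpexg} for this family: for every $\alpha$-distortion $ch'$ of $ch$ the $\overline{\G}$-LP decoder succeeds on $ch'$ with high probability. Lemma~\ref{lpexg} then produces a constant $\ex>0$, depending only on $\alpha$ and $ch$, such that the $\overline{\G}$-LP decoder succeeds on $ch$ with LP excess $\ex$ with high probability; writing $\mu=\mu_{ch}$, this is the statement $\bPr_{\vl\sim\mu^n}[\LP_{\overline{P_n}}(\vl-\ex\vone)\neq\vzero]=o_n(1)$.

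Next I would pass to the dual witness picture via the equivalence of Definition~2.1 and Theorem~2.2 of~\cite{BA14}: success of the $\overline{\G}$-LP decoder on $\vl$ with excess $\ex$ is the same as the existence, on the graph $\overline{G}_n$ of all redundant checks, of a dual witness whose variable-node inequalities all hold with slack at least $\ex$. Thus, with high probability over $\vl\sim\mu^n$, such an $\ex$-slack witness exists; I would fix one realization of $\vl$ that admits it and convert that witness into a (zero-slack) dual witness supported only on checks of degree at most $k$, i.e.\ a witness for $\overline{G}_n^k$. Because every redundant check is a nonzero element of the dual code and hence an $\F_2$-sum of the degree-$\le d$ base checks of $\G$, the aim of the construction is to reroute the weight that the $\ex$-slack witness places on checks of degree $>k$ onto checks of degree $\le k$, charging the reconstruction cost against the available slack $\ex$.

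The decisive step, and the one I expect to be the main obstacle, is the uniform control of this reconstruction cost: I must bound, independently of the block length $n$, the per-variable-node defect incurred by forbidding checks of degree $>k$, and show that this defect falls below $\ex$ once $k$ is taken large enough. This is precisely the combinatorial content that the bounded check degree of $\G$, together with its rigidity and asymptotic strength, supplies in the BSC treatment of~\cite{BA14}; the LLR-boundedness of the MSB channel, $\|L\|_\infty<\infty$, is what keeps the witness weights and hence this defect finite and channel-controlled, so the estimate of~\cite{BA14} transfers with $\|L\|_\infty$ playing the role of the constant BSC log-likelihood ratio. Once $k=k(\alpha,ch)\ge d$ is fixed so that the defect is below $\ex$, the rerouted witness is a valid dual witness on $\overline{G}_n^k$, so with high probability over $\vl\sim\mu^n$ the $\overline{\G}^k$-LP decoder succeeds, which is the claim. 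In short, the entire $n$-uniform truncation argument is the one already carried out over the BSC in~\cite{BA14}; the only genuinely new ingredient is that Lemma~\ref{lpexg} now supplies the positive excess $\ex$ on an arbitrary discrete MSB channel, in place of the BSC LP excess lemma used there.
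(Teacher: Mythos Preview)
Your high-level outline is right: invoke Lemma~\ref{lpexg} to get a constant excess $\ex>0$ for the $\overline{\G}$-LP decoder, pass to a dual witness with slack $\ex$, and then truncate to checks of degree at most $k$ following~\cite{BA14}, with $\|L\|_\infty$ replacing the BSC constant. That is also what the paper does. However, your description of the truncation step misidentifies the mechanism in two ways that would matter if you tried to carry it out.

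First, the paper does \emph{not} reroute the weight on a high-degree check by writing that check as an $\F_2$-sum of base checks. It simply deletes every check of degree larger than $k$, obtaining a trimmed witness $w^k$, and then \emph{superposes} $w^k$ with a fresh dual witness supported on the base graph $G$ that repairs the damage. Second, the quantity you need to control is not a uniform ``per-variable-node defect'' that drops below $\ex$ for large $k$; at any single variable the lost inflow can be as large as $\|L\|_\infty$. What is small is the \emph{fraction} of ``risky'' variables $U$ that lose at least $\ex/2$ of inflow: a counting argument (total outflow from positive-LLR variables is at most $n\|L\|_\infty$, and each removed check has degree exceeding $k$) gives $|U|\le \frac{2n\|L\|_\infty}{\ex(k-1)}$. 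One then invokes the \emph{asymptotic strength} of $\G$ to produce, on the base graph $G$, a dual witness for the asymmetric LLR vector that equals $-\|L\|_\infty$ on $U$ and $\ex/2$ elsewhere; this witness, extended by zeros to $\overline{G}^k$ and added to $w^k$, yields the desired witness on $\overline{G}^k$. Note also that \emph{rigidity} plays no role in the proof of this lemma; it is used only afterwards, in deducing Theorem~\ref{appth} from Lemma~\ref{th1.2}. The paper also first replaces the $\ex$-slack witness by a \emph{primitive} one (Lemma~4.2 of~\cite{BA14}), which is what guarantees $U\subset V^-$ and makes the counting clean.
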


\begin{proof}[Proof of Theorem \ref{appth}]
  Following the proof of Corollary 1.7 in \cite{BA14}, Theorem
  \ref{appth} follows from Lemma \ref{th1.2} and the rigidity of $\G$
  which implies that for each constant $k\geq d$, the LP decoding
  polytope $P\bigl( \overline{G}_n^k \bigr) = P(G_n)$ for $n$ large enough.
\end{proof}

\begin{proof}[Proof of Lemma \ref{th1.2}]
  We use below the terminology of the proof Theorem 1.2 in \cite{BA14}
  to explain the needed modifications.  At a high level, the following
  changes are needed: 
\begin{itemize}
\item
 Instead of a variable received correctly or
  in error, we have positive or nonpositive LLRs respectively.
\item 
  The value of LP excess is $\ex$ instead of $\frac{\delta}{4}$.
\item
  The maximum absolute value of a received LLR is the {\em constant}
  $\|L\|_\infty$ instead of $1$.
\end{itemize}
  More specifically, consider operating the $\overline{G_n}$-LP
  decoder on $ch$: assume that the all-zeros codeword was transmitted
  and consider the received LLR vector $\vl\sim \mu^n_{ch}$.  By the
  LP excess lemma, there exists a constant $\ex>0$ (dependent on
  $\alpha$) such that with high probability, the $\overline{G_n}$-LP
  decoder corrects $\vl$ with LP excess $\ex$, i.e., it corrects $\vl
  - \ex \vone$.  In what follows, consider any such $\vl \in \R^n$.
  To verify Lemma \ref{th1.2}, we will show that the
  $\overline{G}_n^k$-LP decoder corrects $\vl$ for a sufficiently
  large constant $k\geq d$ which depends on $\ex$ and the channel (and
  does not depend on $n$).  For notational simplicity, we will denote
  $G_n,\overline{G}_n^k$ and $\overline{G_n}$ by $G,\overline{G}^k$
  and $\overline{G}$, respectively. Also, let $E, \overline{E}^k$ and
  $\overline{E}$ be the set of edges of $G,\overline{G}^k$ and
  $\overline{G}$, respectively.

  By Theorem 2.2 in \cite{BA14}, there is a hyperflow
  $w:\overline{E}\rightarrow \R$ in $\overline{G}$ for $\vl - \ex
  \vone$.  Hence, $$\vF(w)< \vl - \ex\vone,$$ where $\vF(w) \in \R^n$ is the
  flow as specified in Definition 2.1 in \cite{BA14}.  Let $$V^+ = \{ i
  : \vl_i - \ex > 0\}$$ and $$V^{-}= \{ i : \vl_i - \ex \leq 0\}$$ be the
  set of variables nodes with positive and nonpositive ``shifted LLR''
  respectively.  Since $\overline{G}$ contains all redundant checks,
  we can assume by Lemma 4.2 in \cite{BA14} that $w$ is primitive,
  hence the inflow to each variable in $V^{+}$ is zero and the outflow
  from each variable in $V^{-}$ is zero.  Following \cite{BA14},
  define the {\em trimmed hyperflow} and the resulting {\em risky} and
  {\em problematic} variables as follows.  Trim $w$ by removing all
  check nodes of degree larger than $k$. The trimming process leads to
  a distorted dual witness $w^k: \overline{E}^k\rightarrow \R$ in
  $\overline{G}^k$.  The {\em problematic} variables nodes are those
  for which the hyperflow variables nodes inequalities of $w^k$ are
  violated with respect to $\vl$.  A variable node is called {\em
    risky} if it receives at least $\frac{\ex}{2}$ flow from the
  removed check nodes, thus all the problematic variables are risky.
  The set of risky variable nodes is called $U$. We have $U \subset
  V^{-}$ since $w$ is primitive.  Hence 
\[
\begin{array}{llll}
\vF_i(w^k)&\leq& 0 & \mbox{ if } i\in U, \mbox{ and }\\  
\vF_i(w^k)&<& \vl_i - \ex/2 & \mbox{ if } i\not\in U.
\end{array}
\]
  Since all the removed checks have degree larger than $k$ and since
  $\vl_i \leq \|L\|_\infty$ for each $i$, the removed checks give the
  variables in $V^{-}$ at most $$\frac{| V^{+}|. ( \|L\|_\infty - \ex)
  }{k-1}\leq \frac{n\|L\|_\infty}{k-1}$$ flow.  It follows that
  $$|U|\leq \frac{2n\|L\|_{\infty}}{\ex(k-1)}.$$  Since $w$ is
  primitive, to fix $w^k$ on the problematic variables, it is enough
  to give each variable in $U$ an $\|L\|_\infty$ flow. Following
  \cite{BA14}, we do that by exploiting the asymptotic strength of
  $\G$ and the remaining excess on the nonrisky variable nodes. The
  remaining LP excess on each nonrisky variable is at least
  $\ex-\frac{\ex}{2}=\frac{\ex}{2}$.  Consider the asymmetric LLR
  vector $\vtau\in \R^n$ given by: 
   \[                        
   \vtau_i = \left\{\begin{array}{ll} -\| L\|_\infty &\mbox{ if }
       i\in U \\ \frac{\ex}{2} & \mbox{
       otherwise. } \end{array}\right.
   \]
  We use the remaining excess to fix $w^k$ by superposing $w^k$ with a
  dual witness for $\vtau$.

  Since $\G$ is {\em asymptotically strong}, there exists a constant
  $\delta>0$ dependent on $\frac{\ex}{2\|L\|_\infty}$ such that if
  $|U|\leq \delta n$, the LP decoder of $G$ succeeds on
  $\frac{\vtau}{\|L\|_\infty}$ and hence on $\vtau$. Thus, if
  $$\frac{2\|L\|_\infty}{\ex (k-1)} \leq \delta,$$ then $\vtau$ has a
  dual witness $v: E \rightarrow \R$ in $G$. Since $k \geq d$, let
  $v^k: \overline{E}^k\rightarrow \R$ be the extension of $v$ to
  $\overline{G}^k$ by zeros.  Thus $\vF(v^k)< \vtau$ and accordingly
  $$\vF(w^k+v^k)< \vl.$$ Therefore, $w^k + v^k$ is the desired dual
  witness of $\vl$ in $\overline{G}^k$.  It follows (from Theorem 2.2
  in~\cite{BA14}) that the $\overline{G}^k$-LP decoder successfully
  corrects $\vl$.

  In summary, there exists a constant $\delta>0$ dependent on
  $\frac{\ex}{2\|L\|_\infty}$ such that if $$k = \max \left\{ d, \left\lceil
    \frac{2\|L\|_\infty}{\ex\delta} \right\rceil +1 \right\},$$ which
  depends on the $\ex$ and the channel, then the $\overline{G}^k$-LP
  decoder corrects $\vl$ for any $\vl\in \R^n$ such that the
  $\overline{G}$-LP decoder corrects $\vl - \ex \vone$.
\end{proof}

Note that the proof of Lemma \ref{th1.2} breaks down if the LLRs are
unbounded even if Lemma \ref{lpexg} holds for channels with unbounded
LLRs (see Remark \ref{unbllr}.II).

\end{document}